\documentclass{article}[12pt]
\pagestyle{myheadings}

\usepackage{amsmath,amsthm,amssymb,amsfonts,amscd,eucal}

\numberwithin{equation}{section}

\DeclareMathOperator{\re}{Re}
\DeclareMathOperator{\Var}{Var} 
\DeclareMathOperator{\Cov}{Cov}
\DeclareMathOperator{\Corr}{Corr} 
\DeclareMathOperator{\Tr}{Tr}

\newtheorem{Thm}{Theorem}[section]

\newtheorem{Lemma}[Thm]{Lemma}
\theoremstyle{definition}
\newtheorem{Dfn}[Thm]{Definition}
\theoremstyle{remark}
\newtheorem{Rem}[Thm]{Remark}

\setlength{\oddsidemargin}{0.0in}
\setlength{\textwidth}{6.25in}
\setlength{\topmargin}{-0.5in}
\setlength{\textheight}{9in}

\begin{document}

\title{An inequality related to uncertainty principle in von Neumann
algebras}

\author{ Paolo Gibilisco\footnote{Dipartimento SEFEMEQ, Facolt\`a di
Economia, Universit\`a di Roma ``Tor Vergata", Via Columbia 2, 00133
Rome, Italy.  Email: gibilisco@volterra.uniroma2.it -- URL:
http://www.economia.uniroma2.it/sefemeq/professori/gibilisco} \ and
Tommaso Isola\footnote{Dipartimento di Matematica, Universit\`a di
Roma ``Tor Vergata", Via della Ricerca Scientifica, 00133 Rome, Italy. 
Email: isola@mat.uniroma2.it -- URL:
http://www.mat.uniroma2.it/$\sim$isola} }

\maketitle

\begin{abstract}
Recently Kosaki proved in \cite{Ko} an inequality for matrices that
can be seen as a kind of new uncertainty principle.  Independently,
the same result was proved by Yanagi {\sl et al.} in \cite{YFK}.  The
new bound is given in terms of Wigner-Yanase-Dyson informations. 
Kosaki himself asked if this inequality can be proved in the setting
of von Neumann algebras.  In this paper we provide a positive answer
to that question and moreover we show how the inequality can be
generalized to an arbitrary operator monotone function.

\smallskip

\noindent 2000 {\sl Mathematics Subject Classification.} Primary
62B10, 94A17; Secondary 46L30, 46L60.

\noindent {\sl Key words and phrases.} Uncertainty principle,
Wigner-Yanase-Dyson information, operator monotone functions.
\end{abstract}







\section{Introduction}

If $A,B$ are selfadjoint matrices and $\rho$ is a density matrix, define
\begin{align*}
    \Cov_{\rho}(A,B) &:= \re \{{\rm Tr}(\rho A B)-{\rm Tr}(\rho
    A)\cdot{\rm Tr}(\rho B) \} \\
    \Var_{\rho}(A) &:=\Cov_{\rho}(A,A). 
\end{align*}
The  uncertainty principle reads as 
$$
{\rm Var}_{\rho}(A){\rm Var}_{\rho}(B)
\geq
\frac{1}{4}\vert {\rm Tr}(\rho[A,B])\vert^2.
$$
This inequality can be refined as 
$$
{\rm Var}_{\rho}(A){\rm Var}_{\rho}(B)-\Cov_{\rho}(A,B)^2 \geq
\frac{1}{4}\vert {\rm Tr}(\rho[A,B])\vert^2,
$$
(see \cite{Heis,Schr}).
Recently a different uncertainty principle has been found 
\cite{LZ,LQa,LQb,Ko,YFK}. 
For  $\beta\in(0,1)$ define $\beta$-correlation and $\beta$-information as
\begin{align*}
    \Corr_{\rho,\beta}(A,B) &:=\re\{{\rm Tr}(\rho A B)-{\rm Tr}(\rho^{\beta}
    A\rho^{1-\beta}B) \} \\
    I_{\rho,\beta}(A) &:= \Corr_{\rho,\beta}(A,A) = \Tr(\rho A^{2}) -
    \Tr(\rho^{\beta}A\rho^{1-\beta}A),
\end{align*}
where the latter coincides with the Wigner-Yanase-Dyson information. 
It has been proved that

\begin{equation} \label{betaup}
    \Var_{\rho}(A)\Var_{\rho}(B) - \Cov_{\rho}(A,B)^{2} \geq 
    I_{\rho,\beta}(A)I_{\rho,\beta}(B) - \Corr_{\rho,\beta}(A,B)^{2}. 
\end{equation}

The quantities involved in the previous inequality make a perfect
sense in a von Neumann algebra setting (see for example
\cite{Kosaki:1982/83}).  In ref.  \cite{Ko} Kosaki asked if the
inequality (\ref{betaup}) is true in this more general setting.

In this paper we provide a positive answer to Kosaki question and
moreover we show that, once the inequality is formulated in the
context of operator monotone functions, the result can be greatly
generalized.

\section{Preliminaries}

Denote by $M_{n, sa}$ the space of complex self-adjoint $n \times n$
matrices, and recall that a function $f:(0,\infty)\to {\mathbb R}$ is said
{\it operator monotone} if, for any $n\in {\mathbb N}$, any $A, B\in M_{n,sa}$
such that $0\leq A\leq B$, the inequalities $0\leq f(A)\leq f(B)$
hold.  Then, $f:(0,\infty)\to {\mathbb R}$ is operator monotone {\it iff} for
any $A,B\in{\mathcal B}({\mathcal H})$ such that $0\leq A\leq B$, it holds $f(A)\leq
f(B)$.  An operator monotone function is said {\it symmetric} if
$f(x):=xf(x^{-1})$ and {\it normalized} if $f(1)=1$.  We denote by
${\mathfrak F}$ the class of positive, symmetric, normalized, operator
monotone functions.

Examples of operator monotone functions are the so-called
Wigner-Yanase-Dyson functions
$$
f_{\beta}(x) := \beta (1- \beta)
\frac{(x-1)^2}{(x^{\beta}-1) (x^{1-\beta}-1)}, \qquad \beta\in(0,1). 
$$

Returning to a general $f\in{\mathfrak F}$, we associate to it a 
function $\tilde f\in{\mathfrak F}$ \cite{GII01} defined by 
$$
\tilde{f}(x):=\frac{1}{2}\Bigl( (x+1)-(x-1)^2 \frac{f(0)}{f(x)}
\Bigr),\ x>0.
$$ 

For example
$$
{\tilde f}_{\beta}(x)=\frac{1}{2} (x^{\beta}+x^{1-\beta}).
$$

\begin{Dfn} \label{Dfn:f-correlation}
    For $A,B \in M_{n,sa}$, $f\in{\mathfrak F}$, and $\rho$ a faithful
    density matrix, define $f$-correlation and $f$-information as
    \begin{align*}
	\Corr^{f}_{\rho}(A,B) & := \re \{{\rm Tr}(\rho A B)-{\rm
	Tr}(R_{\rho}\tilde{f}(L_{\rho}R_{\rho}^{-1})(A)\cdot B) \}, \\
	I^{f}_{\rho}(A) & := \Corr^{f}_{\rho}(A,A).
    \end{align*}    
\end{Dfn}

Recall that $f$-information is also known as metric adjusted skew
information (see \cite{Hansen:2006b}).  The following generalization
of inequality (\ref{betaup}) is proved in \cite{GII01}.

\begin{Thm} \label{Thm:GII.Ineq} 
    $$
    \Var_{\rho}(A)\Var_{\rho}(B) - \Cov_{\rho}(A,B)^{2} \geq 
    I^{f}_{\rho}(A)I^{f}_{\rho}(B) - \Corr^{f}_{\rho}(A,B)^{2}.
    $$
\end{Thm}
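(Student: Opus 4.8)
My plan is to recognize both sides of the inequality as $2\times 2$ determinants and to deduce the statement from a Loewner comparison of the two Gram matrices involved. Because of the $\re\{\cdot\}$ and the cyclicity of the trace, $\Cov_\rho$ and $\Corr^f_\rho$ are \emph{symmetric} $\mathbb R$-bilinear forms on $M_{n,sa}$, of which $\Var_\rho$ and $I^f_\rho$ are the diagonal restrictions. Hence, for fixed $A,B$, the matrices
\[
G_{\Var}:=\begin{pmatrix}\Var_\rho(A)&\Cov_\rho(A,B)\\ \Cov_\rho(A,B)&\Var_\rho(B)\end{pmatrix},\qquad
G_{I^f}:=\begin{pmatrix}I^f_\rho(A)&\Corr^f_\rho(A,B)\\ \Corr^f_\rho(A,B)&I^f_\rho(B)\end{pmatrix}
\]
are real symmetric, and their determinants are exactly the two sides of the asserted inequality. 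Since positive semidefinite matrices with $0\le P\le Q$ satisfy $\det P\le\det Q$ (trivial if $\det P=0$; otherwise $P^{-1/2}QP^{-1/2}\ge I$ has all eigenvalues $\ge 1$, whence $\det Q/\det P\ge 1$), it is enough to prove $0\le G_{I^f}\le G_{\Var}$, that is, that the two quadratic forms $A\mapsto I^f_\rho(A)$ and $A\mapsto\Var_\rho(A)-I^f_\rho(A)$ on $M_{n,sa}$ are positive semidefinite (their polarizations being exactly the off-diagonal entries above).

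For the computation I would pass to a spectral decomposition $\rho=\sum_i\lambda_i\,|i\rangle\langle i|$ (all $\lambda_i>0$, $\sum_i\lambda_i=1$) and write $A=\sum_{ij}a_{ij}\,|i\rangle\langle j|$ with $a_{ji}=\overline{a_{ij}}$. Since $|i\rangle\langle j|$ is an eigenvector of $L_\rho R_\rho^{-1}$ with eigenvalue $\lambda_i/\lambda_j$, one has $R_\rho\tilde f(L_\rho R_\rho^{-1})\,|i\rangle\langle j|=m(\lambda_i,\lambda_j)\,|i\rangle\langle j|$, where $m(x,y):=y\,\tilde f(x/y)$ is positive, symmetric, and satisfies $m(\lambda,\lambda)=\lambda$ (because $\tilde f\in{\mathfrak F}$ is symmetric and normalized). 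A short calculation, using $\re\Tr(\rho AB)=\tfrac12\sum_{ij}(\lambda_i+\lambda_j)\,\re(a_{ij}\overline{b_{ij}})$, then gives
\[
I^f_\rho(A)=\sum_{i<j}\bigl(\lambda_i+\lambda_j-2\,m(\lambda_i,\lambda_j)\bigr)\,|a_{ij}|^2 ,
\]
the diagonal terms cancelling because $m(\lambda_i,\lambda_i)=\lambda_i$. Each coefficient is $\ge 0$: from the defining formula $\tilde f(x)=\tfrac12\bigl((x+1)-(x-1)^2 f(0)/f(x)\bigr)$, together with $f(0)\ge 0$ and $f>0$, one gets $\tilde f(x)\le\tfrac{x+1}{2}$ (equivalently, the arithmetic mean is the largest element of ${\mathfrak F}$), so $2\,m(\lambda_i,\lambda_j)=2\lambda_j\tilde f(\lambda_i/\lambda_j)\le\lambda_i+\lambda_j$. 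Hence $I^f_\rho\ge 0$, so $G_{I^f}\ge 0$.

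For the difference, the common term $\re\Tr(\rho AB)$ cancels and one is left with
\[
\Var_\rho(A)-I^f_\rho(A)=\Bigl(\textstyle\sum_i\lambda_i a_{ii}^2-\bigl(\sum_i\lambda_i a_{ii}\bigr)^2\Bigr)+\sum_{i<j}2\,m(\lambda_i,\lambda_j)\,|a_{ij}|^2 ,
\]
with \emph{no} cross terms between the diagonal coordinates $(a_{ii})_i$ and the off-diagonal ones $(a_{ij})_{i<j}$, so it suffices to treat the two blocks separately. The first bracket is the variance of the real random variable $i\mapsto a_{ii}$ with respect to the probability vector $(\lambda_i)$, hence $\ge 0$ with positive semidefinite polarization (the covariance); the second sum is a nonnegative combination of the positive semidefinite forms $A\mapsto|a_{ij}|^2$, since $m(\lambda_i,\lambda_j)>0$. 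Thus $A\mapsto\Var_\rho(A)-I^f_\rho(A)$ is positive semidefinite, i.e.\ $G_{\Var}-G_{I^f}\ge 0$; combined with $G_{I^f}\ge 0$ this gives $0\le G_{I^f}\le G_{\Var}$, and the determinant monotonicity from the first paragraph yields $\det G_{\Var}\ge\det G_{I^f}$, which is the claim.

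The main obstacle is this last step: seeing that $\Var_\rho-I^f_\rho$ splits into an honest classical covariance on the diagonal plus an off-diagonal block with transparently nonnegative coefficients, and that the two blocks do not interact. Everything else — symmetry of the bilinear forms, the elementary bound $\tilde f(x)\le(x+1)/2$, and monotonicity of $\det$ on the positive semidefinite cone — is routine.
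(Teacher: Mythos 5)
Your proposal is correct, and it reaches the inequality by a genuinely different route than the paper does. The paper itself does not reprove Theorem \ref{Thm:GII.Ineq} in the matrix setting: it quotes \cite{GII01}, and the proof actually carried out here (Section \ref{sec:main}, which contains the matrix case as the special case $\omega=\Tr(\rho\,\cdot)$, $\Delta_\omega=L_\rho R_\rho^{-1}$) expresses the gap between the two sides as $\tfrac14\iint_{[0,\infty)^2}\bigl((s+1)\tilde f(t)+(t+1)\tilde f(s)-2\tilde f(s)\tilde f(t)\bigr)\,d\mu(s,t)$, where $\mu=\mu_{\xi\xi}\otimes\mu_{\eta\eta}+\mu_{\eta\eta}\otimes\mu_{\xi\xi}-2\mu_{\xi\eta}\otimes\mu_{\xi\eta}$ is shown to be a positive measure (Lemma \ref{Lem:PosMeasure}) and the integrand is pointwise nonnegative, again via $0\le\tilde f(x)\le\tfrac12(x+1)$. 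You bypass the integral representation entirely: after checking that $\Cov_\rho$ and $\Corr^f_\rho$ are the symmetric polarizations of $\Var_\rho$ and $I^f_\rho$ (your symmetry argument via $m(x,y)=y\tilde f(x/y)=m(y,x)$ is fine, since $\tilde f\in\mathfrak{F}$), you reduce everything to the Loewner ordering $0\le G_{I^f}\le G_{\Var}$ of the two $2\times2$ Gram matrices, i.e.\ to the scalar facts $0\le\tilde f(x)\le\tfrac12(x+1)$ read through your spectral formulas for $I^f_\rho$ and $\Var_\rho-I^f_\rho$, and then invoke monotonicity of the determinant on the positive semidefinite cone; each of these steps is valid as written. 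What your approach buys is economy and transparency: the only nontrivial input is $0\le I^f_\rho\le\Var_\rho$ as quadratic forms, and in fact the same scheme would also streamline the paper's main theorem, since in the von Neumann setting $I^f_\omega\ge0$ is Lemma \ref{Lem:sesqui}$(ii)$ and $\Var_\omega(A)-I^f_\omega(A)={\mathcal F}^f(A_0\xi_\omega,A_0\xi_\omega)\ge0$, so determinant monotonicity applies verbatim once the domain and approximation lemmas are in place. What the paper's route buys is more than nonnegativity: an exact integral formula for the gap in terms of the spectral measures of the modular operator, which is finer information and is the form of the argument that the authors push through the unbounded, von Neumann algebraic setting.
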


In the next Section we prove that the above inequality holds true in a
general von Neumann algebra, thus answering, in particular, the
question raised by Kosaki in \cite{Ko}, and recalled above.  A
different generalization of Theorem \ref{Thm:GII.Ineq} has been proved
in \cite{GII04}.

\section{The main result} 
\label{sec:main}

Let ${\mathcal M}$ be a von Neumann algebra, and $\omega$ a normal faithful state
on ${\mathcal M}$, and denote by ${\mathcal H}_{\omega}$ and $\xi_{\omega}$ the GNS Hilbert
space and vector, and by $S_{\omega}$, $J_{\omega}$ and $\Delta_{\omega}$ the modular
operators associated to $\omega$.

The proof of the main result is divided in a series of Lemmas.  In
order to deal with unbounded operators, we introduce some sesquilinear
forms on ${\mathcal H}_{\omega}$, and take \cite{Kato} as our standard reference.

\begin{Dfn}
    Let $f\in{\mathfrak F}$, and define the following sequilinear forms 
    \begin{align*}
	{\mathcal E}(\xi,\eta) & := \langle \Delta_{\omega}^{1/2}\xi, \Delta_{\omega}^{1/2}\eta
	\rangle, \\
	{\mathcal E}_{1}(\xi,\eta) & := {\mathcal E}(\xi,\eta) + \langle \xi,\eta
	\rangle, \\
	{\mathcal F}^f(\xi,\eta) & := \langle \tilde{f}(\Delta_{\omega})^{1/2}\xi,
	\tilde{f}(\Delta_{\omega})^{1/2}\eta \rangle,\\
    	{\mathcal G}^f(\xi,\eta) & := \frac12 {\mathcal E}_1(\xi,\eta) - {\mathcal F}^f(\xi,\eta).
    \end{align*}
\end{Dfn}

It follows from \cite{Kato}, Example VI.1.13, that ${\mathcal E}$, ${\mathcal E}_1$,
${\mathcal F}^f$ are closed, positive and symmetric sesquilinear forms.

\begin{Lemma} \label{Lem:Approx}
    Let $\xi,\eta\in{\mathcal D}(\Delta_{\omega}^{1/2})$, and $\{ \xi_{n} \}$,
    $\{ \eta_{n} \}\subset {\mathcal D}(\Delta_{\omega})$ be such that $\xi_{n}\to\xi$,
    ${\mathcal E}(\xi_{n}-\xi,\xi_{n}-\xi) \to 0$, $n\to\infty$, and
    analogously for $\eta_{n}$ and $\eta$.  Then
    \begin{align*}
	{\mathcal E}(\xi,\eta) & = \lim_{n\to\infty} {\mathcal E}(\xi_{n},\eta_{n}) =
	\lim_{n\to\infty} \langle \xi_{n}, \Delta_{\omega} \eta_{n} \rangle, \\
	{\mathcal F}^f(\xi,\eta) & = \lim_{n\to\infty}
	{\mathcal F}^f(\xi_{n},\eta_{n}) = \lim_{n\to\infty} \langle \xi_{n},
	\tilde{f}(\Delta_{\omega}) \eta_{n} \rangle.
    \end{align*}
\end{Lemma}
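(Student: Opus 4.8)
The plan is to recognise all four identities as instances of one principle: a closed positive symmetric form is sequentially continuous for its own form norm, and on the domain of the associated self-adjoint operator it is computed by that operator. Since $\mathcal E$, $\mathcal E_1$ and $\mathcal F^f$ are all functions of the single self-adjoint operator $\Delta_\omega$, the whole argument can be run inside the functional calculus of $\Delta_\omega$.

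First I would record the pointwise estimate $\tilde f(x)\le\tfrac12(x+1)$ for every $x>0$, which is immediate from the formula $\tilde f(x)=\tfrac12\bigl((x+1)-(x-1)^2 f(0)/f(x)\bigr)$ together with $f(x)>0$ and $f(0)=\lim_{x\to0^+}f(x)\ge0$. By the spectral theorem this gives ${\mathcal F}^f(\zeta,\zeta)=\langle\tilde f(\Delta_\omega)^{1/2}\zeta,\tilde f(\Delta_\omega)^{1/2}\zeta\rangle\le\tfrac12\langle(\Delta_\omega+1)^{1/2}\zeta,(\Delta_\omega+1)^{1/2}\zeta\rangle=\tfrac12{\mathcal E}_1(\zeta,\zeta)$ for all $\zeta\in{\mathcal D}(\Delta_\omega^{1/2})$, so in particular ${\mathcal D}(\Delta_\omega^{1/2})\subseteq{\mathcal D}(\tilde f(\Delta_\omega)^{1/2})$; squaring the bound, $\tilde f(\lambda)^2\le\tfrac12(1+\lambda^2)$, shows likewise that ${\mathcal D}(\Delta_\omega)\subseteq{\mathcal D}(\tilde f(\Delta_\omega))$.

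Next I would unwind the hypotheses. The condition ${\mathcal E}(\xi_n-\xi,\xi_n-\xi)\to0$ is literally $\|\Delta_\omega^{1/2}(\xi_n-\xi)\|\to0$, so together with $\xi_n\to\xi$ it says that $\xi_n\to\xi$ in the graph norm of $\Delta_\omega^{1/2}$, i.e. in the ${\mathcal E}_1$-norm, and similarly $\eta_n\to\eta$. Feeding $\zeta=\xi_n-\xi$ and $\zeta=\eta_n-\eta$ into the bound ${\mathcal F}^f(\zeta,\zeta)\le\tfrac12{\mathcal E}_1(\zeta,\zeta)$ then also gives $\tilde f(\Delta_\omega)^{1/2}\xi_n\to\tilde f(\Delta_\omega)^{1/2}\xi$ and $\tilde f(\Delta_\omega)^{1/2}\eta_n\to\tilde f(\Delta_\omega)^{1/2}\eta$. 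Joint continuity of the inner product now yields the first equality in each line: ${\mathcal E}(\xi_n,\eta_n)=\langle\Delta_\omega^{1/2}\xi_n,\Delta_\omega^{1/2}\eta_n\rangle\to\langle\Delta_\omega^{1/2}\xi,\Delta_\omega^{1/2}\eta\rangle={\mathcal E}(\xi,\eta)$, and identically ${\mathcal F}^f(\xi_n,\eta_n)\to{\mathcal F}^f(\xi,\eta)$.

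For the second equality, I would use that for each fixed $n$ one has $\xi_n,\eta_n\in{\mathcal D}(\Delta_\omega)\subseteq{\mathcal D}(\tilde f(\Delta_\omega))$ by the second paragraph, so self-adjointness of $\Delta_\omega^{1/2}$ and of $\tilde f(\Delta_\omega)^{1/2}$ gives the exact identities ${\mathcal E}(\xi_n,\eta_n)=\langle\xi_n,\Delta_\omega\eta_n\rangle$ and ${\mathcal F}^f(\xi_n,\eta_n)=\langle\xi_n,\tilde f(\Delta_\omega)\eta_n\rangle$ for every $n$; combining these with the convergences just established finishes the proof. The computations are routine; the only step carrying any content is the pointwise bound on $\tilde f$, since without it one cannot dominate ${\mathcal F}^f$ by ${\mathcal E}_1$ — indeed the very finiteness of ${\mathcal F}^f(\xi,\eta)$ for $\xi,\eta$ merely in ${\mathcal D}(\Delta_\omega^{1/2})$ would be in question. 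Everything else is the closedness statement quoted from \cite{Kato}, Example VI.1.13, together with the spectral theorem.
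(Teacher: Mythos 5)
Your proof is correct, and it reaches the conclusion by a somewhat different route than the paper. The paper works inside Kato's theory of closed sesquilinear forms: from ${\mathcal E}(\xi_n-\xi,\xi_n-\xi)\to 0$ it deduces that $\{\xi_n\}$ is ${\mathcal E}$-Cauchy, uses the bound $0\leq\tilde f(x)\leq\frac12(x+1)$ (quoted from \cite{GII01}) to conclude that $\{\xi_n\}$ is also ${\mathcal F}^f$-Cauchy, then invokes closedness of ${\mathcal F}^f$ to get $\xi\in{\mathcal D}({\mathcal F}^f)$ and ${\mathcal F}^f(\xi_n-\xi,\xi_n-\xi)\to 0$, and finally Kato's Theorem VI.1.12 to pass to the limit in ${\mathcal F}^f(\xi_n,\eta_n)$ and ${\mathcal E}(\xi_n,\eta_n)$. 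You use the same key estimate on $\tilde f$ (which you prove directly from the definition of $\tilde f$, using $f>0$ and $f(0)\geq 0$, rather than citing \cite{GII01}), but you bypass the Cauchy/closedness machinery entirely: the spectral calculus of $\Delta_\omega$ gives ${\mathcal D}(\Delta_\omega^{1/2})\subseteq{\mathcal D}(\tilde f(\Delta_\omega)^{1/2})$ at once, so the domination ${\mathcal F}^f\leq\frac12{\mathcal E}_1$ can be applied directly to $\xi_n-\xi$, turning the hypotheses into norm convergence of $\Delta_\omega^{1/2}\xi_n$ and $\tilde f(\Delta_\omega)^{1/2}\xi_n$, after which joint continuity of the inner product finishes the first equalities. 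This is slightly more self-contained (no appeal to Kato's Theorems VI.1.12, VI.1.21, VI.2.1, and it yields the domain inclusion of Lemma \ref{Lem:FandG}$(i)$ as an immediate byproduct), and you also make explicit the step the paper leaves tacit, namely that $\xi_n,\eta_n\in{\mathcal D}(\Delta_\omega)\subseteq{\mathcal D}(\tilde f(\Delta_\omega))$ justifies the identities ${\mathcal E}(\xi_n,\eta_n)=\langle\xi_n,\Delta_\omega\eta_n\rangle$ and ${\mathcal F}^f(\xi_n,\eta_n)=\langle\xi_n,\tilde f(\Delta_\omega)\eta_n\rangle$. One small presentational point: the domain inclusion ${\mathcal D}(\Delta_\omega^{1/2})\subseteq{\mathcal D}(\tilde f(\Delta_\omega)^{1/2})$ should logically precede (or be stated as part of) the inequality ${\mathcal F}^f(\zeta,\zeta)\leq\frac12{\mathcal E}_1(\zeta,\zeta)$, since the left-hand side only makes sense once $\zeta$ is known to lie in the domain; the spectral integral $\int\tilde f\,d\langle\zeta,e(\cdot)\zeta\rangle\leq\frac12\int(\lambda+1)\,d\langle\zeta,e(\cdot)\zeta\rangle<\infty$ delivers both simultaneously, so this is a matter of ordering, not a gap.
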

\begin{proof}
    It follows from \cite{Kato} Theorem VI.2.1 that ${\mathcal D}(\Delta_{\omega})$ is
    a core for ${\mathcal D}({\mathcal E})\equiv {\mathcal D}(\Delta_{\omega}^{1/2})$, so that, from
    \cite{Kato} Theorem VI.1.21, for any $\xi\in{\mathcal D}(\Delta_{\omega}^{1/2})$
    there is $\{ \xi_{n} \}\subset{\mathcal D}(\Delta_{\omega})$ such that
    $\xi_{n}\to\xi$, and ${\mathcal E}(\xi_{n}-\xi,\xi_{n}-\xi) \to 0$,
    $n\to\infty$.  Then ${\mathcal E}(\xi_{n}-\xi_{m},\xi_{n}-\xi_{m}) \to 0$,
    $m,n\to\infty$.  Now observe that $0 \leq \tilde{f}(x) \leq
    \frac12(x+1)$, for $x>0$ \cite{GII01}, so that
    \begin{align*}
	{\mathcal F}^f(\xi_{n}-\xi_{m},\xi_{n}-\xi_{m}) & = \langle
	\tilde{f}(\Delta_{\omega})^{1/2}(\xi_{n}-\xi_{m}),
	\tilde{f}(\Delta_{\omega})^{1/2}(\xi_{n}-\xi_{m}) \rangle \\
	& = \langle \xi_{n}-\xi_{m},
	\tilde{f}(\Delta_{\omega})(\xi_{n}-\xi_{m}) \rangle \\
	& \leq \frac12 \langle \xi_{n}-\xi_{m}, \xi_{n}-\xi_{m}
	\rangle + \frac12 \langle \xi_{n}-\xi_{m},
	\Delta_{\omega}(\xi_{n}-\xi_{m}) \rangle \\
	& = \frac12 \| \xi_{n}-\xi_{m} \| + \frac12 {\mathcal E}(
	\xi_{n}-\xi_{m}, \xi_{n}-\xi_{m}) \to 0,\ m,n\to\infty.
    \end{align*}
    This implies $\xi\in{\mathcal D}({\mathcal F}^f)$ and
    ${\mathcal F}^f(\xi_{n}-\xi,\xi_{n}-\xi) \to 0$, $n\to\infty$.
    
    Therefore, if $\xi,\eta\in{\mathcal D}(\Delta_{\omega}^{1/2})$, and
    $\{ \xi_{n} \}$, $\{ \eta_{n} \} \subset {\mathcal D}(\Delta_{\omega})$ approximate
    $\xi,\eta$ in the above sense, we obtain, from \cite{Kato} Theorem
    VI.1.12, that ${\mathcal F}^f(\xi,\eta) = \lim_{n\to\infty}
    {\mathcal F}^f(\xi_{n},\eta_{n})$, and analogously for ${\mathcal E}$.
\end{proof}

\begin{Lemma} \label{Lem:FandG}
    \item{$(i)$} ${\mathcal D}({\mathcal F}^f) \supset {\mathcal D}(\Delta_\omega^{1/2})$,
    
    \item{$(ii)$} ${\mathcal G}^f$ is a symmetric sesquilinear form on ${\mathcal D}({\mathcal G}^f)
    \supset {\mathcal D}(\Delta_\omega^{1/2})$, which is positive on
    ${\mathcal D}(\Delta_\omega^{1/2})$.
\end{Lemma}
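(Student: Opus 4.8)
The plan is to prove the two claims in Lemma~\ref{Lem:FandG} by reducing everything to spectral-theoretic facts about the modular operator $\Delta_\omega$ together with the elementary pointwise inequality $0\le\tilde f(x)\le\frac12(x+1)$ for $x>0$, which is recorded above (from \cite{GII01}). For part $(i)$, I would observe that $\mathcal{D}({\mathcal F}^f)=\mathcal{D}\bigl(\tilde f(\Delta_\omega)^{1/2}\bigr)$ and that, by the spectral theorem applied to the self-adjoint operator $\Delta_\omega$, a vector $\xi$ lies in this domain precisely when $\int \tilde f(\lambda)\,d\langle E_\lambda\xi,\xi\rangle<\infty$, where $E$ is the spectral measure of $\Delta_\omega$. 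Since $\tilde f(\lambda)\le\frac12(\lambda+1)$, this integral is dominated by $\frac12\int(\lambda+1)\,d\langle E_\lambda\xi,\xi\rangle=\frac12\bigl(\|\Delta_\omega^{1/2}\xi\|^2+\|\xi\|^2\bigr)$, which is finite as soon as $\xi\in\mathcal{D}(\Delta_\omega^{1/2})$. Hence $\mathcal{D}(\Delta_\omega^{1/2})\subset\mathcal{D}({\mathcal F}^f)$, establishing $(i)$. Equivalently, one can run this estimate at the level of forms exactly as in the proof of Lemma~\ref{Lem:Approx}: on $\mathcal{D}(\Delta_\omega)$ one has ${\mathcal F}^f(\xi,\xi)\le\frac12{\mathcal E}_1(\xi,\xi)$, and then pass to the closure of ${\mathcal E}_1$, whose form domain is $\mathcal{D}(\Delta_\omega^{1/2})$.

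For part $(ii)$, the inclusion $\mathcal{D}({\mathcal G}^f)\supset\mathcal{D}(\Delta_\omega^{1/2})$ is immediate once $(i)$ is known, because ${\mathcal G}^f=\frac12{\mathcal E}_1-{\mathcal F}^f$ is defined on $\mathcal{D}({\mathcal E}_1)\cap\mathcal{D}({\mathcal F}^f)=\mathcal{D}(\Delta_\omega^{1/2})\cap\mathcal{D}({\mathcal F}^f)=\mathcal{D}(\Delta_\omega^{1/2})$. That ${\mathcal G}^f$ is sesquilinear and symmetric is inherited from ${\mathcal E}_1$ and ${\mathcal F}^f$, both of which are symmetric sesquilinear forms. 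Positivity on $\mathcal{D}(\Delta_\omega^{1/2})$ is the substantive point: for $\xi\in\mathcal{D}(\Delta_\omega^{1/2})$ we must show ${\mathcal G}^f(\xi,\xi)=\frac12{\mathcal E}_1(\xi,\xi)-{\mathcal F}^f(\xi,\xi)\ge0$, i.e.\ ${\mathcal F}^f(\xi,\xi)\le\frac12{\mathcal E}_1(\xi,\xi)$. Writing both sides spectrally this is exactly $\int\tilde f(\lambda)\,d\langle E_\lambda\xi,\xi\rangle\le\frac12\int(\lambda+1)\,d\langle E_\lambda\xi,\xi\rangle$, which follows by integrating the pointwise bound $\tilde f(\lambda)\le\frac12(\lambda+1)$ against the positive measure $d\langle E_\lambda\xi,\xi\rangle$. (The lower bound $\tilde f(\lambda)\ge0$ is what guarantees ${\mathcal F}^f$ itself is positive, but is not needed for the positivity of ${\mathcal G}^f$.)

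The only mild subtlety — and the step I would be most careful about — is the domain bookkeeping for unbounded forms: one must make sure that on $\mathcal{D}(\Delta_\omega^{1/2})$ the three quantities ${\mathcal E}_1(\xi,\xi)$, ${\mathcal F}^f(\xi,\xi)$ and hence ${\mathcal G}^f(\xi,\xi)$ are all genuinely finite before writing down the inequality, and that the spectral integrals representing the closed forms ${\mathcal E}_1$ and ${\mathcal F}^f$ agree with the naive expressions $\langle\Delta_\omega^{1/2}\xi,\Delta_\omega^{1/2}\xi\rangle+\|\xi\|^2$ and $\langle\tilde f(\Delta_\omega)^{1/2}\xi,\tilde f(\Delta_\omega)^{1/2}\xi\rangle$ on this common domain. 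This is handled by the standard representation of a closed positive symmetric form as $\langle T^{1/2}\xi,T^{1/2}\eta\rangle$ for its associated positive self-adjoint operator $T$ (here $T=\Delta_\omega+1$ and $T=\tilde f(\Delta_\omega)$ respectively), together with the core argument already used in Lemma~\ref{Lem:Approx}; with that in hand the pointwise inequality $0\le\tilde f(x)\le\frac12(x+1)$ does all the real work. No genuine obstacle is expected beyond this routine functional-analytic care.
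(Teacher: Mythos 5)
Your proof is correct, but it takes a different route from the paper's. For the positivity in $(ii)$ the paper argues in two steps: it first verifies ${\mathcal G}^f(\xi,\xi)=\langle\xi,g(\Delta_\omega)\xi\rangle\geq 0$ with $g(x):=\frac12(x+1)-\tilde f(x)$ only for $\xi\in{\mathcal D}(\Delta_\omega)$, and then extends to all of ${\mathcal D}(\Delta_\omega^{1/2})$ by the approximation machinery of Lemma \ref{Lem:Approx} (a core argument plus closedness of the forms, via Kato); likewise $(i)$ is obtained there as a by-product of that approximation, by showing ${\mathcal F}^f(\xi_n-\xi_m,\xi_n-\xi_m)\to 0$ along an ${\mathcal E}$-approximating sequence. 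You instead work directly on the full domain ${\mathcal D}(\Delta_\omega^{1/2})$ with the spectral calculus: the characterization $\xi\in{\mathcal D}(\tilde f(\Delta_\omega)^{1/2})$ iff $\int\tilde f(\lambda)\,d\langle E_\lambda\xi,\xi\rangle<\infty$, together with the pointwise bound $0\le\tilde f(\lambda)\le\frac12(\lambda+1)$, gives $(i)$ and the inequality ${\mathcal F}^f(\xi,\xi)\le\frac12{\mathcal E}_1(\xi,\xi)$ in one stroke, with no limiting procedure. Both arguments are sound; yours is more self-contained and arguably shorter for this Lemma, while the paper's choice is natural in context because the approximation Lemma \ref{Lem:Approx} is needed anyway for the main theorem (to pass from ${\mathcal D}(\Delta_\omega)$ to ${\mathcal D}(\Delta_\omega^{1/2})$ in the inequality $H(\xi,\eta)\ge0$), so it reuses machinery already in place. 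Your closing remark on domain bookkeeping is exactly the right point to flag, and your observation that only the upper bound $\tilde f(x)\le\frac12(x+1)$ is needed for positivity of ${\mathcal G}^f$ is accurate.
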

\begin{proof}
    $(i)$ It follows from the proof of the previous Lemma.
	
    \noindent $(ii)$ We only need to prove positivity.  To begin with, let
    $\xi\in {\mathcal D}(\Delta_\omega)$.  Then, setting $g(x):=
    \frac12(x+1)-\tilde{f}(x) \geq 0$, for all $x>0$, we have
    ${\mathcal G}^f(\xi,\xi) = \frac12 {\mathcal E}_1(\xi,\xi) - {\mathcal F}^f(\xi,\xi) =
    \frac12 \langle \xi, \xi \rangle + \frac12 \langle \xi, \Delta_{\omega}\xi
    \rangle -\langle \xi, \tilde{f}(\Delta_{\omega})\xi \rangle = \langle \xi,
    g(\Delta_{\omega})\xi \rangle \geq 0$.
    
    Moreover, if $\xi\in {\mathcal D}(\Delta_\omega^{1/2})$, and $\xi_n\in {\mathcal D}(\Delta_\omega)$
    is such that $\xi_n\to \xi$, and ${\mathcal E}(\xi_n-\xi,\xi_n-\xi)\to 0$,
    then, from Lemma \ref{Lem:Approx} it follows ${\mathcal G}^f(\xi,\xi) =
    \lim_{n\to\infty} {\mathcal G}^f(\xi_n,\xi_n) \geq 0$.
\end{proof}

We can now introduce the main objects of study.  In the sequel, we
denote by $T\widehat\in{\mathcal M}$ the fact that $T$ is a closed, densely
defined, linear operator on ${\mathcal H}_\omega$, and is affiliated with ${\mathcal M}$.

\begin{Dfn}
    For any $A,B\widehat\in{\mathcal M}_{sa}$, such that $\xi_{\omega}\in
    {\mathcal D}(A)\cap{\mathcal D}(B)$, and any $f\in{\mathfrak F}$, we set
    $A_{0}:=A-\langle \xi_{\omega}, A\xi_{\omega} \rangle$, $B_{0}:=B-\langle
    \xi_{\omega}, B\xi_{\omega} \rangle$, and define the bilinear forms
    \begin{align*}	    
	\Cov_{\omega}(A,B) & := \re \langle
	A_{0}\xi_{\omega}, B_{0}\xi_{\omega} \rangle, \\
	\Var_{\omega}(A) & := \Cov_{\omega}(A,A),\\
	\Corr^{f}_{\omega}(A,B) & := \re \langle A_{0}\xi_{\omega}, B_{0}\xi_{\omega}
	\rangle - \re \langle \tilde{f}(\Delta_{\omega})^{1/2} A_{0}\xi_{\omega},
	\tilde{f}(\Delta_{\omega})^{1/2}B_{0}\xi_{\omega} \rangle, \\
	I^{f}_{\omega}(A) & := \Corr^{f}_{\omega}(A,A).
    \end{align*}
\end{Dfn}

\begin{Rem}
    Observe that in the matrix case  $\omega=\Tr(\rho
    \cdot)$, for some density matrix $\rho$, and
    $\Delta_{\omega}=L_{\rho}R_{\rho}^{-1}$, so that the previous Definition is a
    true generalization of covariance and $f$-correlation in the
    matrix case.
\end{Rem}

For the reader's convenience, we prove the following folklore
result.

\begin{Lemma} \label{Lem:DomainOfS}
    ${\mathcal D}(\Delta_{\omega}^{1/2}) = \{ T\xi_{\omega}: T\widehat\in{\mathcal M},
    \xi_{\omega}\in{\mathcal D}(T)\cap{\mathcal D}(T^{*}) \}$.
\end{Lemma}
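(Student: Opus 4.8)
The plan is to prove the two inclusions separately, using the polar decomposition of $S_\omega$ and the Tomita--Takesaki structure. Recall that $S_\omega$ is the closure of the map $x\xi_\omega \mapsto x^*\xi_\omega$, $x\in{\mathcal M}$, and that $S_\omega = J_\omega \Delta_\omega^{1/2}$ with $J_\omega$ antiunitary, so that ${\mathcal D}(S_\omega) = {\mathcal D}(\Delta_\omega^{1/2})$ and it suffices to work with $S_\omega$ throughout. For the inclusion $\supseteq$, let $T\widehat\in{\mathcal M}$ with $\xi_\omega\in{\mathcal D}(T)\cap{\mathcal D}(T^*)$, and let $T = v|T|$ be its polar decomposition with $v\in{\mathcal M}$ and $|T|$ affiliated with ${\mathcal M}$. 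Approximating $|T|$ by the bounded operators $|T|e_n$, where $e_n := \chi_{[0,n]}(|T|)\in{\mathcal M}$ are the spectral projections, one gets $x_n := v|T|e_n \in{\mathcal M}$ with $x_n\xi_\omega \to T\xi_\omega$; one then checks that $x_n^*\xi_\omega = e_n|T|v^*\xi_\omega$ converges as well, using $\xi_\omega\in{\mathcal D}(T^*)$ together with the fact that $T^* = |T|v^*$ (so $T^*\xi_\omega$ makes sense and $e_n|T|v^*\xi_\omega \to |T|v^*\xi_\omega = T^*\xi_\omega$). Since $S_\omega$ is closed and $x_n\xi_\omega\to T\xi_\omega$ while $S_\omega(x_n\xi_\omega) = x_n^*\xi_\omega$ converges, we conclude $T\xi_\omega\in{\mathcal D}(S_\omega) = {\mathcal D}(\Delta_\omega^{1/2})$.

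For the reverse inclusion $\subseteq$, let $\zeta\in{\mathcal D}(S_\omega) = {\mathcal D}(\Delta_\omega^{1/2})$ and set $\zeta' := S_\omega\zeta$. The idea is to produce a closed operator $T\widehat\in{\mathcal M}$ with $T\xi_\omega = \zeta$ and $T^*\xi_\omega = \zeta'$. One defines $T$ on the core $\{x'\xi_\omega : x'\in{\mathcal M}'\}$ (which is dense since $\xi_\omega$ is cyclic for ${\mathcal M}'$ as well, $\omega$ being faithful hence $\xi_\omega$ separating for ${\mathcal M}$) by $T x'\xi_\omega := x'\zeta$; this is well defined and preadjointable because for $x',y'\in{\mathcal M}'$ one has $\langle T x'\xi_\omega, y'\xi_\omega\rangle = \langle x'\zeta, y'\xi_\omega\rangle = \langle \zeta, x'^* y'\xi_\omega\rangle$, and using $S_{\omega'} = S_\omega^*$ (the modular operator for ${\mathcal M}'$) together with $\zeta\in{\mathcal D}(S_\omega)$ one rewrites this symmetrically in terms of $\zeta'$, exhibiting a densely defined adjoint; hence $T$ is closable, its closure is affiliated with ${\mathcal M}$ (it commutes with the right action of ${\mathcal M}'$ by construction), and $\xi_\omega\in{\mathcal D}(T)\cap{\mathcal D}(T^*)$ with $T\xi_\omega = \zeta$.

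I expect the main obstacle to be the bookkeeping in the second inclusion: showing that the operator $T$ defined on ${\mathcal M}'\xi_\omega$ is genuinely closable with the right adjoint, and that its closure is affiliated with ${\mathcal M}$ rather than merely commuting with some vectors. This is exactly where the identity $S_\omega^* = S_{\omega'} = F_\omega$ and the cyclicity/separating property of $\xi_\omega$ do the real work, and care is needed because $T$ and $T^*$ are unbounded so all domain statements must be tracked explicitly. The first inclusion is comparatively routine once polar decomposition and the spectral cutoffs $e_n$ are in place.
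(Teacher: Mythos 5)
Your proposal is correct and follows essentially the same route as the paper: the inclusion $\supseteq$ via polar decomposition, spectral cutoffs $e_n=\chi_{[0,n]}(|T|)$ and closedness of $S_\omega$, and the inclusion $\subseteq$ by defining $T$ on ${\mathcal M}'\xi_\omega$ by $x'\xi_\omega\mapsto x'\zeta$ and using $S_\omega^*$ (the modular involution for ${\mathcal M}'$) to exhibit the operator $y'\xi_\omega\mapsto y'S_\omega\zeta$ inside the adjoint, giving closability, affiliation, and $\xi_\omega\in{\mathcal D}(T)\cap{\mathcal D}(T^*)$. The paper merely phrases closability as a direct preclosedness check and names the adjoint candidate $T_{S_\omega\eta}$, which is the same computation you sketch.
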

\begin{proof}
    $(1)$ Let us first prove that ${\mathcal D}(\Delta_{\omega}^{1/2}) \subset
    \{ T\xi_{\omega}: T\widehat\in{\mathcal M},
    \xi_{\omega}\in{\mathcal D}(T)\cap{\mathcal D}(T^{*}) \}$.  Indeed, let $\eta\in
    {\mathcal D}(\Delta_\omega^{1/2})$, and define the linear operator $T_0 : x'\xi_\omega
    \in {\mathcal M}'\xi_\omega \mapsto x'\eta \in {\mathcal H}_\omega$, which is densely
    defined, and affiliated with ${\mathcal M}$.  Let us show that is
    preclosed: indeed, if $x'_n\xi_{\omega} \to 0$, and $x'_n\eta \to \zeta$,
    then, for any $y'\in{\mathcal M}'$, we get
    \begin{align*}
	\langle \zeta, y'\xi_\omega \rangle & = \lim_{n\to\infty} \langle
	x'_n\eta,y'\xi_\omega \rangle = \lim_{n\to\infty} \langle
	\eta,{x'_n}^*y'\xi_\omega \rangle = \lim_{n\to\infty} \langle
	\eta, S_{\omega}^* ({y'}^*{x'_n}\xi_\omega) \rangle \\
	& = \lim_{n\to\infty} \langle {y'}^*{x'_n}\xi_\omega,S_{\omega}\eta
	\rangle = \lim_{n\to\infty} \langle {x'_n}\xi_\omega, y'S_{\omega}\eta
	\rangle = 0,
    \end{align*}	
    which shows that $T_0$ is preclosed.  Let $T_\eta := \overline{T_0}$. 
    Then, $T_\eta\widehat\in{\mathcal M}$, and $T_\eta \xi_\omega = \eta$.  It
    remains to be proved that $\xi_\omega\in{\mathcal D}(T_\eta^*)$.  Since
    $S_{\omega}\eta\in{\mathcal D}(\Delta_{\omega}^{1/2})$, we can also consider
    $T_{S_{\omega}\eta}$.  Let us show that $T_{S_{\omega}\eta} \subset
    T_{\eta}^{*}$.  Indeed, for any $x',y'\in{\mathcal M}'$, we have
    $$
    \langle T_{S_{\omega}\eta} x'\xi_{\omega}, y'\xi_\omega \rangle = \langle
    x'S_{\omega}\eta, y'\xi_\omega \rangle = \langle S_{\omega}\eta,
    {x'}^{*}y'\xi_\omega \rangle = \langle {y'}^{*}x'\xi_\omega ,\eta \rangle
    = \langle x'\xi_\omega ,y'\eta \rangle = \langle x'\xi_\omega
    ,T_{\eta}y'\xi_{\omega} \rangle.
    $$
    Then, $\xi_{\omega}\in{\mathcal D}(T_{S_{\omega}\eta}) \subset {\mathcal D}(T_{\eta}^{*})$,
    which shows that ${\mathcal D}(\Delta_{\omega}^{1/2}) \subset \{ T\xi_{\omega}:
    T\widehat\in{\mathcal M}, \xi_{\omega}\in{\mathcal D}(T)\cap{\mathcal D}(T^{*}) \}$.
    
    \noindent $(2)$ Let us now prove that ${\mathcal D}(\Delta_{\omega}^{1/2}) \supset
    \{ T\xi_{\omega}: T\widehat\in{\mathcal M},
    \xi_{\omega}\in{\mathcal D}(T)\cap{\mathcal D}(T^{*}) \}$.  Indeed, if $T\widehat\in{\mathcal M}$
    is such that $\xi_{\omega}\in{\mathcal D}(T)\cap{\mathcal D}(T^{*})$, we can consider
    its polar decomposition $T=v|T|$, and let $e_{n}:=
    \chi_{[0,n]}(|T|)$, $T_{n} := v|T|e_{n}$, for any $n\in{\mathbb N}$. 
    Since $\xi_{\omega}\in{\mathcal D}(T)$, we have $T_{n}\xi_{\omega} =
    ve_{n}|T|\xi_{\omega} \to T\xi_{\omega}$.  Moreover, since
    $\xi_{\omega}\in{\mathcal D}(T^{*})$, we have $T_{n}^{*}\xi_{\omega} =
    |T|e_{n}v^{*}\xi_{\omega} = e_{n}T^{*}\xi_{\omega} \to T^{*}\xi_{\omega}$. 
    Since $S_{\omega}$ is a closed operator, it follows that
    $T\xi_{\omega}\in{\mathcal D}(S_{\omega})={\mathcal D}(\Delta_{\omega}^{1/2})$ [and
    $S_{\omega}T\xi_{\omega}=T^{*}\xi_{\omega}$], which is what we wanted to
    prove.
\end{proof}

\begin{Lemma} \label{Lem:sesqui}

    For any $A,B\widehat\in{\mathcal M}_{sa}$, such that $\xi_{\omega}\in
    {\mathcal D}(A)\cap{\mathcal D}(B)$, and any $f\in{\mathfrak F}$, we have

    \item{$(i)$} $\Cov_{\omega}(A,B) = \frac12 \re
    {\mathcal E}_{1}(A_{0}\xi_{\omega},B_{0}\xi_{\omega})$ is a positive bilinear form,
    
    \item{$(ii)$} $\Corr^{f}_{\omega}(A,B) = \re
    {\mathcal G}^f(A_{0}\xi_{\omega},B_{0}\xi_{\omega})$ is a positive bilinear form.
\end{Lemma}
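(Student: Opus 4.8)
The plan is to reduce both identities to a single structural fact: for $A\widehat\in{\mathcal M}_{sa}$ with $\xi_{\omega}\in{\mathcal D}(A)$, the centred vector $A_{0}\xi_{\omega}$ lies in ${\mathcal D}(\Delta_{\omega}^{1/2})={\mathcal D}(S_{\omega})$ and is fixed by $S_{\omega}$. First I would verify this: $A_{0}=A-\langle \xi_{\omega},A\xi_{\omega}\rangle$ is again self-adjoint (the subtracted scalar is real, since $A$ is self-adjoint), affiliated with ${\mathcal M}$, and $\xi_{\omega}\in{\mathcal D}(A_{0})={\mathcal D}(A_{0}^{*})$; hence Lemma \ref{Lem:DomainOfS} — concretely the computation carried out in part $(2)$ of its proof — gives $A_{0}\xi_{\omega}\in{\mathcal D}(S_{\omega})$ with $S_{\omega}A_{0}\xi_{\omega}=A_{0}^{*}\xi_{\omega}=A_{0}\xi_{\omega}$, and similarly for $B_{0}\xi_{\omega}$. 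Writing the polar decomposition $S_{\omega}=J_{\omega}\Delta_{\omega}^{1/2}$ and using that $J_{\omega}$ is an involution, this rewrites as $\Delta_{\omega}^{1/2}A_{0}\xi_{\omega}=J_{\omega}A_{0}\xi_{\omega}$.

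Granting this, part $(i)$ follows quickly. Since $J_{\omega}$ is anti-unitary,
$$
{\mathcal E}(A_{0}\xi_{\omega},B_{0}\xi_{\omega}) = \langle \Delta_{\omega}^{1/2}A_{0}\xi_{\omega}, \Delta_{\omega}^{1/2}B_{0}\xi_{\omega}\rangle = \langle J_{\omega}A_{0}\xi_{\omega}, J_{\omega}B_{0}\xi_{\omega}\rangle = \overline{\langle A_{0}\xi_{\omega},B_{0}\xi_{\omega}\rangle},
$$
so $\re{\mathcal E}(A_{0}\xi_{\omega},B_{0}\xi_{\omega}) = \re\langle A_{0}\xi_{\omega},B_{0}\xi_{\omega}\rangle$, and therefore
$$
\frac12\re{\mathcal E}_{1}(A_{0}\xi_{\omega},B_{0}\xi_{\omega}) = \frac12\bigl(\re{\mathcal E}(A_{0}\xi_{\omega},B_{0}\xi_{\omega}) + \re\langle A_{0}\xi_{\omega},B_{0}\xi_{\omega}\rangle\bigr) = \re\langle A_{0}\xi_{\omega},B_{0}\xi_{\omega}\rangle = \Cov_{\omega}(A,B).
$$
Bilinearity over ${\mathbb R}$ is clear from $(A+B)_{0}\xi_{\omega}=A_{0}\xi_{\omega}+B_{0}\xi_{\omega}$ and $(tA)_{0}\xi_{\omega}=tA_{0}\xi_{\omega}$, $t\in{\mathbb R}$, and positivity is just $\Var_{\omega}(A)=\|A_{0}\xi_{\omega}\|^{2}\geq 0$.

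For part $(ii)$, note that $A_{0}\xi_{\omega}\in{\mathcal D}(\Delta_{\omega}^{1/2})\subset{\mathcal D}({\mathcal F}^{f})$ by Lemma \ref{Lem:FandG}$(i)$, so ${\mathcal F}^{f}$ and ${\mathcal G}^{f}=\frac12{\mathcal E}_{1}-{\mathcal F}^{f}$ are legitimately evaluated on $(A_{0}\xi_{\omega},B_{0}\xi_{\omega})$. Combining the displayed formula for $\frac12\re{\mathcal E}_{1}$ with the definition of ${\mathcal F}^{f}$ gives
$$
\re{\mathcal G}^{f}(A_{0}\xi_{\omega},B_{0}\xi_{\omega}) = \re\langle A_{0}\xi_{\omega},B_{0}\xi_{\omega}\rangle - \re\langle \tilde{f}(\Delta_{\omega})^{1/2}A_{0}\xi_{\omega}, \tilde{f}(\Delta_{\omega})^{1/2}B_{0}\xi_{\omega}\rangle = \Corr^{f}_{\omega}(A,B),
$$
while positivity, $I^{f}_{\omega}(A)={\mathcal G}^{f}(A_{0}\xi_{\omega},A_{0}\xi_{\omega})\geq 0$, is immediate from Lemma \ref{Lem:FandG}$(ii)$ since $A_{0}\xi_{\omega}\in{\mathcal D}(\Delta_{\omega}^{1/2})$.

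I expect the main obstacle to be the very first step: genuinely extracting $A_{0}\xi_{\omega}\in{\mathcal D}(\Delta_{\omega}^{1/2})$ together with $S_{\omega}A_{0}\xi_{\omega}=A_{0}\xi_{\omega}$ from Lemma \ref{Lem:DomainOfS}, which is exactly where one uses that $A$ (hence $A_{0}$) is self-adjoint and affiliated with ${\mathcal M}$, not merely symmetric; one should also be slightly careful that taking real parts is legitimate, i.e. that all forms are evaluated on vectors in their form domains, which was arranged above. Everything after that is bookkeeping with real parts and the anti-unitarity/involutivity of $J_{\omega}$, together with the form-domain inclusions and the positivity of ${\mathcal G}^{f}$ already established in Lemma \ref{Lem:FandG}.
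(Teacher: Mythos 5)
Your proof is correct and follows essentially the same route as the paper: both hinge on Lemma \ref{Lem:DomainOfS} (specifically $A_{0}\xi_{\omega}\in{\mathcal D}(\Delta_{\omega}^{1/2})$ with $S_{\omega}A_{0}\xi_{\omega}=A_{0}^{*}\xi_{\omega}=A_{0}\xi_{\omega}$), the anti-unitarity of $J_{\omega}$ to identify $\re{\mathcal E}(A_{0}\xi_{\omega},B_{0}\xi_{\omega})$ with $\re\langle A_{0}\xi_{\omega},B_{0}\xi_{\omega}\rangle$, and Lemma \ref{Lem:FandG}$(ii)$ for the positivity in part $(ii)$. The only difference is cosmetic: the paper computes $\langle B_{0}\xi_{\omega},A_{0}\xi_{\omega}\rangle=\langle B_{0}^{*}\xi_{\omega},A_{0}^{*}\xi_{\omega}\rangle={\mathcal E}(A_{0}\xi_{\omega},B_{0}\xi_{\omega})$ directly, while you pass through $\Delta_{\omega}^{1/2}A_{0}\xi_{\omega}=J_{\omega}A_{0}\xi_{\omega}$, which is the same identity.
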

\begin{proof}
    $(i)$ Observe that 
    \begin{align*}
	\langle B_{0}\xi_{\omega}, A_{0}\xi_{\omega} \rangle & = \langle
	B_{0}^{*}\xi_{\omega}, A_{0}^{*}\xi_{\omega} \rangle = \langle
	J_{\omega}\Delta_{\omega}^{1/2} B_{0}\xi_{\omega}, J_{\omega}\Delta_{\omega}^{1/2}
	A_{0}\xi_{\omega} \rangle \\
	& = \langle \Delta_{\omega}^{1/2} A_{0}\xi_{\omega}, \Delta_{\omega}^{1/2}
	B_{0}\xi_{\omega} \rangle = {\mathcal E}(A_{0}\xi_{\omega},B_{0}\xi_{\omega}).
    \end{align*}
    The thesis follows from this and the fact that ${\mathcal D}(\Delta_{\omega}^{1/2})
    = \{ T\xi_{\omega}: T\widehat\in{\mathcal M},
    \xi_{\omega}\in{\mathcal D}(T)\cap{\mathcal D}(T^{*}) \}$.
    
    \noindent $(ii)$ It follows from $(i)$ and Lemma \ref{Lem:FandG} $(ii)$.
\end{proof}

\begin{Lemma} \label{Lem:PosMeasure}
    Let $\xi,\eta\in{\mathcal H}_\omega$, $\Delta_{\omega} = \int_{0}^{\infty} t\, de(t)$,
    and define, for $\Omega$ a Borel subset of $[0,\infty)$,
    $\mu_{\xi\eta}(\Omega):= \re \langle \xi, e(\Omega)\eta \rangle$, and
    $$
    \mu:= \mu_{\xi\xi}\otimes\mu_{\eta\eta} +
    \mu_{\eta\eta}\otimes\mu_{\xi\xi} - 2
    \mu_{\xi\eta}\otimes\mu_{\xi\eta}.
    $$
    Then, $\mu$ is a bounded positive Borel measure on
    $[0,\infty)^{2}$.
\end{Lemma}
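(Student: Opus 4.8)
The plan is to prove that $\mu$ is nonnegative on every measurable rectangle $\Omega\times\Omega'$ and then to upgrade this to all Borel subsets of $[0,\infty)^{2}$ by a routine extension argument.

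First I would check that the three measures entering the definition are finite. Since $e$ is the (projection valued) spectral measure of the positive operator $\Delta_{\omega}$ one has $e([0,\infty))=I$, so $\mu_{\xi\xi}$ and $\mu_{\eta\eta}$ are finite positive Borel measures with total masses $\|\xi\|^{2}$ and $\|\eta\|^{2}$ (indeed $\mu_{\xi\xi}(\Omega)=\|e(\Omega)\xi\|^{2}\ge 0$), while $\Omega\mapsto\langle\xi,e(\Omega)\eta\rangle$ is a complex Borel measure of total variation at most $\|\xi\|\,\|\eta\|$, so its real part $\mu_{\xi\eta}$ is a finite signed Borel measure. Hence $\mu$ is a well-defined finite signed Borel measure on $[0,\infty)^{2}$; in particular it is bounded, and once positivity is established $\mu([0,\infty)^{2})=2\|\xi\|^{2}\|\eta\|^{2}-2(\re\langle\xi,\eta\rangle)^{2}<\infty$.

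The one substantive ingredient is the Cauchy--Schwarz estimate for the projection $e(\Omega)$: for every Borel $\Omega\subseteq[0,\infty)$, using $e(\Omega)=e(\Omega)^{2}=e(\Omega)^{*}$,
$$
|\mu_{\xi\eta}(\Omega)|=\bigl|\re\langle e(\Omega)\xi,e(\Omega)\eta\rangle\bigr|\le\|e(\Omega)\xi\|\,\|e(\Omega)\eta\|=\mu_{\xi\xi}(\Omega)^{1/2}\mu_{\eta\eta}(\Omega)^{1/2}.
$$
Combining this with the arithmetic--geometric mean inequality, for any two Borel sets $\Omega,\Omega'$ set $p=\mu_{\xi\xi}(\Omega)$, $p'=\mu_{\xi\xi}(\Omega')$, $q=\mu_{\eta\eta}(\Omega)$, $q'=\mu_{\eta\eta}(\Omega')$, $r=\mu_{\xi\eta}(\Omega)$, $r'=\mu_{\xi\eta}(\Omega')$; then $p,p',q,q'\ge0$, $r^{2}\le pq$, $r'^{2}\le p'q'$, and
$$
\mu(\Omega\times\Omega')=pq'+qp'-2rr'\ge 2\sqrt{pq'\,qp'}-2|r|\,|r'|\ge 2\sqrt{pq\,p'q'}-2\sqrt{pq\,p'q'}=0 .
$$
Thus $\mu\ge 0$ on all measurable rectangles.

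Finally I would extend this to all Borel sets. Any finite union of measurable rectangles can be rewritten as a finite disjoint union of measurable rectangles (in each coordinate pass to the finite partition generated by the sides of the given rectangles), so by finite additivity and the previous step $\mu$ is nonnegative on the algebra $\mathcal{A}$ of finite disjoint unions of rectangles. Since $\mu$ is a finite signed measure, $\{E:\mu(E)\ge0\}$ is closed under increasing and decreasing countable limits of sets, hence is a monotone class; as it contains $\mathcal{A}$, it contains $\sigma(\mathcal{A})$, which is the Borel $\sigma$-algebra of $[0,\infty)^{2}$. Therefore $\mu\ge0$ everywhere, which is the assertion. I do not expect a genuine obstacle: the proof rests entirely on the rectangle estimate above, and the only point that deserves to be spelled out is that nonnegativity on all measurable rectangles already forces nonnegativity as a measure, because such rectangles generate the product $\sigma$-algebra through the algebra of their finite disjoint unions.
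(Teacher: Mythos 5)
Your argument is correct and follows essentially the same route as the paper: positivity on measurable rectangles via the Cauchy--Schwarz estimate $|\mu_{\xi\eta}(\Omega)|\le\mu_{\xi\xi}(\Omega)^{1/2}\mu_{\eta\eta}(\Omega)^{1/2}$ together with the arithmetic--geometric mean inequality (which the paper phrases as a perfect square), and then an extension to all Borel sets. The only difference is that you make explicit the ``standard measure theoretic arguments'' the paper leaves implicit, namely the passage from rectangles to the algebra of finite disjoint unions and then to the Borel $\sigma$-algebra by a monotone class argument, which is a worthwhile clarification but not a different proof.
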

\begin{proof}
    Let $\Omega_1,\Omega_2$ be Borel subsets of $[0,\infty)$, and set
    $e_{j}:=e(\Omega_{j})$, $j=1,2$.  Observe that $| \re \langle \xi,
    e_{1}\eta \rangle \cdot \re \langle \xi, e_{2}\eta \rangle | \leq
    \|e_{1}\xi\| \cdot \|e_{1}\eta\| \cdot \|e_{2}\xi\| \cdot
    \|e_{2}\eta\|$, so that
    $$
    \mu(\Omega_{1}\times \Omega_{2}) \geq \| e_{1}\xi \|^{2} \cdot \| e_{2}\eta
    \|^{2}+\| e_{2}\xi \|^{2} \cdot \| e_{1}\eta \|^{2} - 2 \|
    e_{1}\xi \| \cdot \| e_{1}\eta \| \cdot \| e_{2}\xi \| \cdot \|
    e_{2}\eta \| \geq 0.
    $$
    The thesis follows by standard measure theoretic arguments.
\end{proof}

\begin{Thm}
   For any $A,B\widehat\in{\mathcal M}_{sa}$, such that $\xi_{\omega}\in 
   {\mathcal D}(A)\cap{\mathcal D}(B)$, and any $f\in{\mathfrak F}$, we have
   $$
   \Var_{\omega}(A)\Var_{\omega}(B)- \Cov_{\omega}(A,B)^{2} \geq
   I^{f}_{\omega}(A)I^{f}_{\omega}(B) - \Corr^{f}_{\omega}(A,B)^{2}.
   $$
\end{Thm}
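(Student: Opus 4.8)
The plan is to diagonalize every quantity in sight against the spectral resolution of the modular operator $\Delta_{\omega}$, thereby reducing the assertion to the positivity of a single integral against the measure $\mu$ of Lemma~\ref{Lem:PosMeasure}. Write $\xi := A_{0}\xi_{\omega}$ and $\eta := B_{0}\xi_{\omega}$. Since $A_{0},B_{0}$ are selfadjoint, affiliated with ${\mathcal M}$, and $\xi_{\omega}\in{\mathcal D}(A_{0})={\mathcal D}(A_{0}^{*})$ (with ${\mathcal D}(A_{0})={\mathcal D}(A)$, and likewise for $B_{0}$), Lemma~\ref{Lem:DomainOfS} gives $\xi,\eta\in{\mathcal D}(\Delta_{\omega}^{1/2})$. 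Let $\Delta_{\omega}=\int_{0}^{\infty}t\,de(t)$ and, as in Lemma~\ref{Lem:PosMeasure}, form the signed measures $\mu_{\xi\xi},\mu_{\eta\eta},\mu_{\xi\eta}$ and the positive measure $\mu$ on $[0,\infty)^{2}$. Finally set
$$
h(t):=\tfrac12(1+t),\qquad g(t):=\tfrac12(1+t)-\tilde f(t),
$$
and recall from the proof of Lemma~\ref{Lem:Approx} that $0\leq\tilde f(t)\leq\tfrac12(1+t)$, so that $0\leq g(t)\leq h(t)$ for every $t\geq 0$.

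The first step is to express the six quantities as integrals. Since ${\mathcal E}_{1}(\zeta,\zeta')=\langle\zeta,\zeta'\rangle+\langle\Delta_{\omega}^{1/2}\zeta,\Delta_{\omega}^{1/2}\zeta'\rangle=\int_{0}^{\infty}(1+t)\,d\langle\zeta,e(t)\zeta'\rangle$ and ${\mathcal F}^{f}(\zeta,\zeta')=\int_{0}^{\infty}\tilde f(t)\,d\langle\zeta,e(t)\zeta'\rangle$, Lemma~\ref{Lem:sesqui} yields $\Var_{\omega}(A)=\int h\,d\mu_{\xi\xi}$, $\Var_{\omega}(B)=\int h\,d\mu_{\eta\eta}$, $\Cov_{\omega}(A,B)=\int h\,d\mu_{\xi\eta}$, and, using ${\mathcal G}^{f}=\tfrac12{\mathcal E}_{1}-{\mathcal F}^{f}$, $I^{f}_{\omega}(A)=\int g\,d\mu_{\xi\xi}$, $I^{f}_{\omega}(B)=\int g\,d\mu_{\eta\eta}$, $\Corr^{f}_{\omega}(A,B)=\int g\,d\mu_{\xi\eta}$. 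All integrals converge absolutely: $\xi,\eta\in{\mathcal D}(\Delta_{\omega}^{1/2})$ forces $\int(1+t)\,d\mu_{\xi\xi}<\infty$ and $\int(1+t)\,d\mu_{\eta\eta}<\infty$, the domination $|\mu_{\xi\eta}|\leq\tfrac12(\mu_{\xi\xi}+\mu_{\eta\eta})$ (a consequence of $|\langle\xi,e(\Omega)\eta\rangle|\leq\|e(\Omega)\xi\|\,\|e(\Omega)\eta\|$) controls the cross term, and $0\leq g\leq h$ does the rest.

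The second step is a Fubini computation on $[0,\infty)^{2}$: for $\phi\in\{h,g\}$ one gets
$$
\int_{[0,\infty)^{2}}\phi(t)\phi(s)\,d\mu(t,s)=2\Bigl(\textstyle\int\phi\,d\mu_{\xi\xi}\Bigr)\Bigl(\int\phi\,d\mu_{\eta\eta}\Bigr)-2\Bigl(\int\phi\,d\mu_{\xi\eta}\Bigr)^{2},
$$
so that, subtracting the identity for $\phi=g$ from the one for $\phi=h$,
$$
\Var_{\omega}(A)\Var_{\omega}(B)-\Cov_{\omega}(A,B)^{2}-I^{f}_{\omega}(A)I^{f}_{\omega}(B)+\Corr^{f}_{\omega}(A,B)^{2}=\frac12\int_{[0,\infty)^{2}}\bigl(h(t)h(s)-g(t)g(s)\bigr)\,d\mu(t,s).
$$
Since $0\leq g\leq h$ pointwise we have $0\leq g(t)g(s)\leq h(t)h(s)$, and $\mu\geq 0$ by Lemma~\ref{Lem:PosMeasure}; hence the right-hand side is nonnegative, which is exactly the inequality.

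The only genuine obstacle is the measure-theoretic bookkeeping needed to make this ``diagonalization'' rigorous for unbounded $A,B$: verifying the total-variation domination of $\mu_{\xi\eta}$, the integrability of the linearly growing weights $h$ and $g$ against $\mu_{\xi\xi},\mu_{\eta\eta},|\mu_{\xi\eta}|$ and against $\mu$, and the applicability of Fubini so that the double integral splits as claimed. None of this is deep, but it is the step that requires care; once the identity for $\int\int(h(t)h(s)-g(t)g(s))\,d\mu$ is established, the conclusion follows at once from $0\leq g\leq h$ and the positivity of $\mu$.
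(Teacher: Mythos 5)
Your proof is correct and follows essentially the same route as the paper: both reduce the claim, via Lemma \ref{Lem:sesqui}, to the positivity of an integral against the positive measure $\mu$ of Lemma \ref{Lem:PosMeasure} (your integrand $\frac12\bigl(h(t)h(s)-g(t)g(s)\bigr)$ is exactly the paper's $\frac14\bigl((s+1)\tilde f(t)+(t+1)\tilde f(s)-2\tilde f(s)\tilde f(t)\bigr)$), using Fubini and the bound $0\le\tilde f(x)\le\frac12(x+1)$. The only difference is technical: you represent ${\mathcal E}_1$ and ${\mathcal F}^f$ spectrally directly on ${\mathcal D}(\Delta_\omega^{1/2})$, with the domination $|\mu_{\xi\eta}|\le\frac12(\mu_{\xi\xi}+\mu_{\eta\eta})$ ensuring absolute convergence, whereas the paper first establishes the inequality for $\xi,\eta\in{\mathcal D}(\Delta_\omega)$ and then extends to the form domain via the approximation Lemma \ref{Lem:Approx}; both ways of handling the unboundedness are legitimate.
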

\begin{proof}
    Set 
    \begin{align*}
	G(A,B) & := \Var_{\omega}(A)\Var_{\omega}(B)- \Cov_{\omega}(A,B)^{2} -
	I^{f}_{\omega}(A)I^{f}_{\omega}(B) + \Corr^{f}_{\omega}(A,B)^{2} \\
	& \stackrel{(a)}{=} \frac12 {\mathcal E}_{1}(A_{0}\xi_{\omega},A_{0}\xi_{\omega}) \cdot \frac12
	{\mathcal E}_{1}(B_{0}\xi_{\omega},B_{0}\xi_{\omega}) - \Bigl( \frac12 \re
	{\mathcal E}_{1}(A_{0}\xi_{\omega},B_{0}\xi_{\omega}) \Bigr)^{2} \\
	& \quad - \Bigl( \frac12 {\mathcal E}_1(A_{0}\xi_{\omega},A_{0}\xi_{\omega}) -
	{\mathcal F}^f(A_{0}\xi_{\omega},A_{0}\xi_{\omega}) \Bigr) \Bigl( \frac12
	{\mathcal E}_1(B_{0}\xi_{\omega},B_{0}\xi_{\omega}) -
	{\mathcal F}^f(B_{0}\xi_{\omega},B_{0}\xi_{\omega}) \Bigr) \\
	& \quad + \Bigl( \frac12 \re
	{\mathcal E}_1(A_{0}\xi_{\omega},B_{0}\xi_{\omega}) - \re
	{\mathcal F}^f(A_{0}\xi_{\omega},B_{0}\xi_{\omega}) \Bigr)^{2} \\
	& = \frac12 {\mathcal E}_{1}(A_{0}\xi_{\omega},A_{0}\xi_{\omega}) \cdot
	{\mathcal F}^f(B_{0}\xi_{\omega},B_{0}\xi_{\omega}) + \frac12
	{\mathcal F}^f(A_{0}\xi_{\omega},A_{0}\xi_{\omega}) \cdot
	{\mathcal E}_{1}(B_{0}\xi_{\omega},B_{0}\xi_{\omega}) \\
	& \quad - {\mathcal F}^f(A_{0}\xi_{\omega},A_{0}\xi_{\omega})\cdot
	{\mathcal F}^f(B_{0}\xi_{\omega},B_{0}\xi_{\omega}) - \re
	{\mathcal E}_{1}(A_{0}\xi_{\omega},B_{0}\xi_{\omega})\cdot
	\re{\mathcal F}^f(A_{0}\xi_{\omega},B_{0}\xi_{\omega}) \\
	& \quad + \bigl( \re{\mathcal F}^f(A_{0}\xi_{\omega},B_{0}\xi_{\omega})
	\bigr)^{2},
    \end{align*}
    where in $(a)$ we have used Lemma \ref{Lem:sesqui}.
    Let us now introduce the function, for
    $\xi,\eta\in{\mathcal D}(\Delta_{\omega}^{1/2})$,
    $$
    H(\xi,\eta) := \frac12 {\mathcal E}_{1}(\xi,\xi) \cdot {\mathcal F}^f(\eta,\eta) +
    \frac12 {\mathcal F}^f(\xi,\xi) \cdot {\mathcal E}_{1}(\eta,\eta) - {\mathcal F}^f(\xi,\xi)\cdot
    {\mathcal F}^f(\eta,\eta) - \re {\mathcal E}_{1}(\xi,\eta)\cdot \re{\mathcal F}^f(\xi,\eta) +
    \bigl( \re{\mathcal F}^f(\xi,\eta) \bigr)^{2},
    $$
    and recall that ${\mathcal D}(\Delta_{\omega}^{1/2}) = \{ T\xi_{\omega}:
    T\widehat\in{\mathcal M}, \xi_{\omega}\in{\mathcal D}(T)\cap{\mathcal D}(T^{*}) \}$, so that, if
    $A,B$ are as in the statement of the Theorem, we obtain $G(A,B) =
    H(A_{0}\xi_{\omega},B_{0}\xi_{\omega})$, and to prove the theorem it
    suffices to show that $H(\xi,\eta)\geq 0$, for all
    $\xi,\eta\in{\mathcal D}(\Delta_\omega^{1/2})$.  Observe that, for
    $\xi,\eta\in{\mathcal D}(\Delta_{\omega})$, we get
    \begin{align*}
	H(\xi,\eta) & = \frac12 \langle \xi, (1+\Delta_{\omega}) \xi \rangle
	\cdot \langle \eta, \tilde{f}(\Delta_{\omega}) \eta \rangle + \frac12
	\langle \eta, (1+\Delta_{\omega}) \eta \rangle \cdot \langle \xi,
	\tilde{f}(\Delta_{\omega}) \xi \rangle \\
	& \quad - \langle \xi, \tilde{f}(\Delta_{\omega}) \xi \rangle \cdot
	\langle \eta, \tilde{f}(\Delta_{\omega}) \eta \rangle - \re \langle
	\xi, (1+\Delta_{\omega}) \eta \rangle \cdot \re \langle \xi,
	\tilde{f}(\Delta_{\omega}) \eta \rangle + \bigl( \re \langle \xi,
	\tilde{f}(\Delta_{\omega}) \eta \rangle \bigr)^{2}\\
	& \stackrel{(b)}{=} \frac12 \int_{0}^{\infty} (s+1)\,
	d\mu_{\xi\xi}(s) \int_{0}^{\infty} \tilde{f}(t)\,
	d\mu_{\eta\eta}(t) + \frac12 \int_{0}^{\infty} \tilde{f}(s) \,
	d\mu_{\xi\xi}(s) \int_{0}^{\infty} (t+1) \, d\mu_{\eta\eta}(t)
	\\
	& \quad - \int_{0}^{\infty} \tilde{f}(s) \, d\mu_{\xi\xi}(s)
	\int_{0}^{\infty} \tilde{f}(t) \, d\mu_{\eta\eta}(t) - \frac12
	\int_{0}^{\infty} (s+1)\, d\mu_{\xi\eta}(s) \int_{0}^{\infty}
	\tilde{f}(t)\, d\mu_{\xi\eta}(t) \\
	& \quad - \frac12 \int_{0}^{\infty} \tilde{f}(s) \,
	d\mu_{\xi\eta}(s) \int_{0}^{\infty} (t+1) \, d\mu_{\xi\eta}(t)
	- \int_{0}^{\infty} \tilde{f}(s) \, d\mu_{\xi\eta}(s)
	\int_{0}^{\infty} \tilde{f}(t) \, d\mu_{\xi\eta}(t) \\
	& \stackrel{(c)}{=} \frac12 \int_{[0,\infty)^{2}} \bigl(
	(s+1)\tilde{f}(t) +(t+1)\tilde{f}(s) -
	2\tilde{f}(s)\tilde{f}(t) \bigr) \, d\mu_{\xi\xi}\otimes
	\mu_{\eta\eta}(s,t) \\
	& - \frac12 \int_{[0,\infty)^{2}} \bigl(
	(s+1)\tilde{f}(t) +(t+1)\tilde{f}(s) -
	2\tilde{f}(s)\tilde{f}(t) \bigr) \, d\mu_{\xi\eta}\otimes
	\mu_{\xi\eta}(s,t) \\
	& \stackrel{(d)}{=} \frac14 \iint_{[0,\infty)^{2}} \bigl(
	(s+1)\tilde{f}(t) + (t+1)\tilde{f}(s)
	-2\tilde{f}(s)\tilde{f}(t) \bigr) \, d\mu(s,t),
    \end{align*}
    where we used in $(b)$ notation as in Lemma \ref{Lem:PosMeasure},
    in $(c)$ Fubini-Tonelli Theorem, and in $(d)$ the symmetries of
    the first integrand and notation as in Lemma \ref{Lem:PosMeasure}. 
    
    \noindent Since $\mu$ is a positive measure, and
    $$
    (s+1)\tilde{f}(t) + (t+1)\tilde{f}(s) -2\tilde{f}(s)\tilde{f}(t)
    = \bigl(s+1- \tilde{f}(s)\bigr)\tilde{f}(t) +
    \bigl(t+1-\tilde{f}(t) \bigr) \tilde{f}(s) \geq 0,
    $$
    we obtain $H(\xi,\eta)\geq 0$, for any $\xi,\eta\in{\mathcal D}(\Delta_{\omega})$.
        
    It follows from Lemma \ref{Lem:Approx} that, for any
    $\xi,\eta\in{\mathcal D}(\Delta_{\omega}^{1/2})$, we have $H(\xi,\eta) =
    \lim_{n\to\infty} H(\xi_{n},\eta_{n})\geq 0$, which ends the
    proof.
\end{proof}

\end{document}